\documentclass[sigplan,twocolumn,nonacm]{acmart}
\renewcommand\footnotetextcopyrightpermission[1]{}
\usepackage{cleveref}
\usepackage{graphicx}
\usepackage{subcaption}
\usepackage{xspace}
\usepackage{listings}
\usepackage{url}
\usepackage{amsmath}
\usepackage{cleveref}
\usepackage{algorithm}
\usepackage{algpseudocode}
\usepackage{booktabs}
\usepackage{pifont}
\usepackage{graphicx}
\usepackage{breqn}

\setcitestyle{nosort}
\lstdefinelanguage{Rust}{
  morekeywords={as, break, const, continue, crate, else, enum, extern, false, fn, for, if, impl, in, let, loop, match, mod, move, mut, pub, ref, return, Self, self, static, struct, super, trait, true, type, unsafe, use, where, while, async, await, dyn},
  morecomment=[l]{//},
  morecomment=[s]{/*}{*/},
  morestring=[b]",
  morestring=[b]',
}

\iffalse
\newcommand{\newcommenter}[3]{%
  \newcommand{#1}[1]{%
    \textcolor{#2}{\small\textsf{[{#3}: {##1}]}}%
  }%
}
\else
\newcommand{\newcommenter}[3]{\newcommand{#1}[1]{}}
\fi

\newcommenter{\lk}{blue}{LK}
\newcommenter{\yt}{orange}{YT}

\newcommand{\NormalSend}{\textit{Normal Send}\xspace}
\newcommand{\EagerSend}{\textit{Eager Send}\xspace}
\newcommand{\YCT}{\textit{YCT}\xspace}
\newcommand{\ACK}{\textit{ACK}\xspace}
\newcommand{\UNACK}{\textit{UNACK}\xspace}
\newcommand{\OB}{\textit{OB}\xspace}
\newcommand{\MODE}{\textit{MODE}\xspace}
\newcommand{\EAGERSENT}{\textit{EAGER\_SENT}} 

\begin{document}

\title{Can you keep a secret? A new protocol for sender-side enforcement of causal message delivery}

\author{Yan Tong}
\affiliation{%
  \institution{University of California}
  \city{Santa Cruz}
  \country{USA}}
\email{ytong24@ucsc.edu}

\author{Nathan Liittschwager}
\affiliation{%
  \institution{University of California}
  \city{Santa Cruz}
  \country{USA}}
\email{nliittsc@ucsc.edu}

\author{Lindsey Kuper}
\affiliation{%
  \institution{University of California}
  \city{Santa Cruz}
  \country{USA}}
\email{lkuper@ucsc.edu}

\sloppy

\begin{abstract}
Protocols for causal message delivery are widely used in distributed systems. Traditionally, causal delivery can be enforced either on the message sender's side or on the receiver's side. The traditional sender-side approach avoids the message metadata overhead of the receiver-side approach, but is more conservative than necessary. We present \emph{Cykas} (``Can you keep a secret?''), a new protocol for sender-side enforcement of causal delivery that sidesteps the conservativeness of the traditional sender-side approach by allowing eager sending of messages and constraining the behavior of their recipients. We implemented the Cykas protocol in Rust and checked the safety and liveness of our implementation using the Stateright implementation-level model checker. Our experiments show that for applications involving long-running jobs, Cykas has a performance advantage: Cykas lets long-running jobs start (and end) earlier, leading to shorter overall execution time compared to the traditional sender-side approach.  
\end{abstract}

\maketitle

\section{Introduction}\label{sec:introduction}

One of the challenges of implementing distributed systems is the asynchrony of message-passing communication. Because of asynchrony, sent messages may arrive at their destinations in arbitrary order, leading to confusion and bugs.
Protocols for \emph{causal delivery} of messages~\cite{birman-virtual-synchrony, schiper-causal-ordering, birman-reliable, birman-lightweight-cbcast,raynal1991causal,mattern2005non} bring (partial) order to this chaos by ensuring that when a message $m$ is delivered at a process $p$, any message $m'$ sent before $m$ (in the sense of Lamport's \emph{happens-before} relation~\cite{lamport-clocks}) will have already been delivered at $p$. Applications that rely on causal message delivery include causally consistent data stores~\cite{ahamad-causal-memory, lloyd-cops}, operation-based CRDTs~\cite{shapiro-crdts},
and certain distributed snapshot protocols~\cite{acharya-causal-snapshots, alagar-causal-snapshots}, among others.

\begin{figure*}[htb]

    \centering
    \includegraphics[width=.3\textwidth]{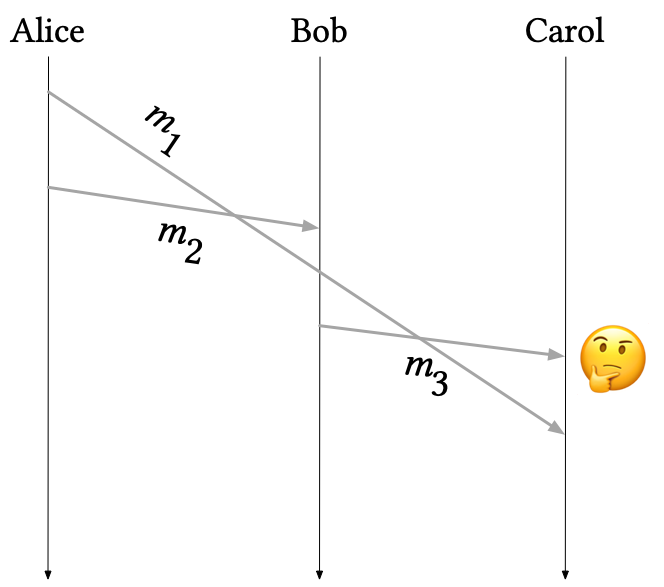}\hfill
    \includegraphics[width=.3\textwidth]{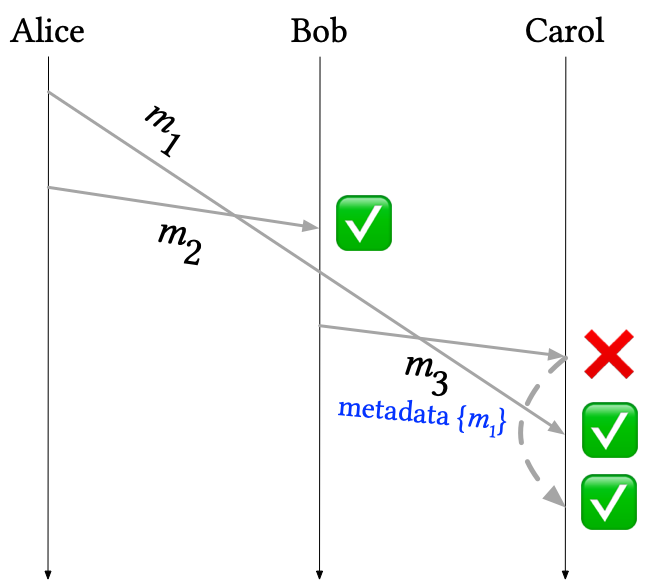}\hfill
    \includegraphics[width=.3\textwidth]{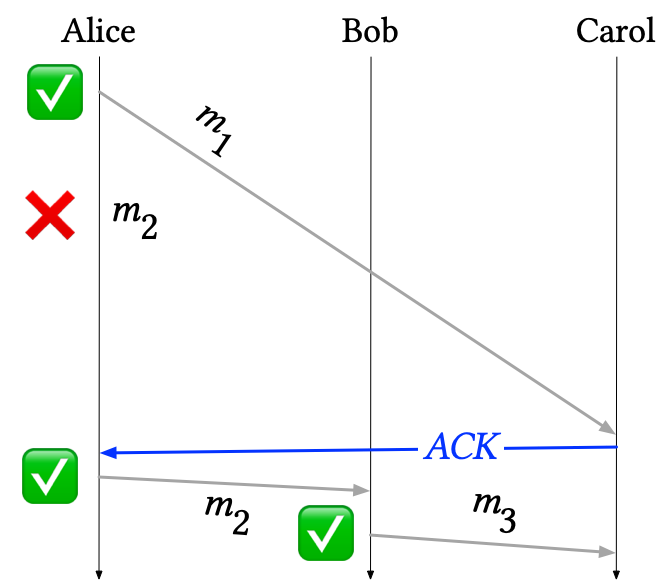}

    \caption{A violation of causal message delivery (left), and the receiver-side (center) and sender-side (right) approaches to preventing it.  The send of message $m_1$ happens before the send of message $m_3$, thanks to the causal relationship created by the intervening message $m_2$.  Causal delivery therefore mandates that Carol must deliver $m_1$ before $m_3$.}
    \label{fig:causal-delivery-intro}

\end{figure*}

The execution in \Cref{fig:causal-delivery-intro} (left) shows an example violation of causal delivery: message $m_1$ is sent by Alice before message $m_3$ is sent by Bob, but Carol receives and delivers $m_3$ before $m_1$.
A standard approach~\cite{birman-lightweight-cbcast,raynal1991causal} to enforcing causal delivery is for senders to attach metadata such as a vector clock~\cite{mattern-vector-time, fidge-vector-time, schmuck-dissertation} or matrix clock~\cite{raynal1991causal} to each message, representing the message's causal dependencies. We call this approach \emph{receiver-side} enforcement of causal delivery.
\Cref{fig:causal-delivery-intro} (center) illustrates (at a high level) the receiver-side approach: when message $m_3$ arrives at Carol, it is deemed undeliverable after inspecting the attached metadata, so it is queued for later delivery.  Eventually, Carol receives and delivers $m_1$, and $m_3$ may be dequeued and delivered.
A disadvantage of the receiver-side approach is that the metadata imposes an overhead on each sent message that is proportional in size to the number of processes in the system. In a system with $n$ participants, depending on the protocol in use, the metadata may be a vector of size $O(n)$ per message~\cite{birman-lightweight-cbcast} or even a matrix of size $O(n^2)$ per message~\cite{raynal1991causal}, which would be prohibitively large for a system with many participants and constrained bandwidth.

Mattern and Fünfrocken~\cite{mattern2005non} propose an alternative approach: instead of potentially delaying delivery on the recipient's side, their protocol delays the \emph{sending} of messages, then sends them without any metadata and unconditionally delivers them on the recipient's side.
\Cref{fig:causal-delivery-intro} (right) illustrates this \emph{sender-side} approach. Alice's message $m_2$ to Bob is deemed unsendable and goes into an outgoing message queue on Alice's end, where it waits until Alice receives an acknowledgement message from Carol. At that point, $m_2$ becomes sendable and is transmitted to Bob. There is no chance of $m_3$ overtaking $m_1$, so it can be immediately delivered when Carol receives it.
Since it does not burden messages with metadata, the sender-side approach is suitable for systems with constrained bandwidth and large numbers of processes (as we quantify later on in \Cref{sec:evaluation}).
The trade-off is that some messages are needlessly delayed. For instance, Alice's message $m_2$ to \emph{Bob} is delayed in the sender-side approach, even though it would \emph{not} violate causal delivery for that message to be sent (and delivered at Bob) earlier.

In this paper, we address the question of how to design a causal delivery protocol that avoids the metadata overhead of the receiver-side approach while also steering clear of needless delays.
We present the \emph{Cykas} (short for ``Can you keep a secret?'') causal message delivery protocol, a sender-side protocol with a new mechanism for \emph{eager} sending of messages.  When circumstances permit, a process $p_i$ may eagerly send a message to a recipient $p_j$ --- but with a caveat that $p_j$ cannot take \emph{externally observable} actions, such as sending messages, until it gets a follow-up message from $p_i$.
Cykas is a drop-in replacement for traditional causal delivery protocols, meaning that it exposes the same interface for sending and delivering messages to clients.  In particular, the mechanism for eager sending of messages is contained entirely within the protocol; clients do not need to make decisions about when a message is to be eagerly sent or not.
The advantage of Cykas over the traditional sender-side protocol is that \emph{intervening} messages, like $m_2$ in \Cref{fig:causal-delivery-intro}, arrive at (and can be immediately delivered at) their recipients sooner. This advantage becomes apparent in executions that involve \emph{long-running jobs} triggered by intervening messages, as we show in \Cref{sec:evaluation}.

In summary, the contributions of this paper are:
\begin{itemize}
    \item The design of the Cykas sender-side causal delivery protocol (\Cref{sec:cykas}).  We motivate the need for the protocol, describe its design, and compare its behavior to the traditional sender-side protocol~\cite{mattern2005non}.
    \item Proofs of safety and liveness for the Cykas protocol (\Cref{sec:correctness}).
    \item An implementation of the Cykas protocol in Rust, using the Stateright~\cite{stateright} actor library (\Cref{sec:verification}). We use the integrated Stateright model checker to carry out bounded verification of safety and liveness for our Cykas implementation.
    \item An empirical performance evaluation of the Cykas protocol (\Cref{sec:evaluation}). We find that Cykas has a performance advantage over traditional sender-side~\cite{mattern2005non} and receiver-side~\cite{raynal1991causal} protocols in large-scale systems with bandwidth constraints, uniform communication patterns, and where messages trigger long-running jobs.
\end{itemize}
\noindent We discuss related work in \Cref{sec:related} and conclude in \Cref{sec:discussion}.

\section{The Cykas Protocol}\label{sec:cykas}

\newcommand{\push}{\textit{push}}
\newcommand{\pop}{\textit{pop()}}
\newcommand{\isempty}{\textit{is_empty()}}
\newcommand{\peek}{\textit{peek()}}
\newcommand{\sender}{\textit{sender}}
\newcommand{\receiver}{\textit{receiver}}
\newcommand{\allzero}{\textit{all\_zero()}}
\newcommand{\iseagersend}{\textit{is\_eager\_send()}}
\newcommand{\esreceiver}{\textit{es\_receiver}}
\newcommand{\bitvecqueue}{\textit{bitvec\_queue}}
\newcommand{\bitvec}{\textit{bitvec}\xspace}

\yt{TODO: In any case, the system model (whether processes can fail or not) should be make explicitly clear early in the text. Still related with the system model, the algorithm seems to assume that the computations performed by processes upon receiving eager messages have no side effects.}

\yt{TODO: starve problem.}

\begin{figure*}[tb]

\centering
\includegraphics[width=\textwidth]{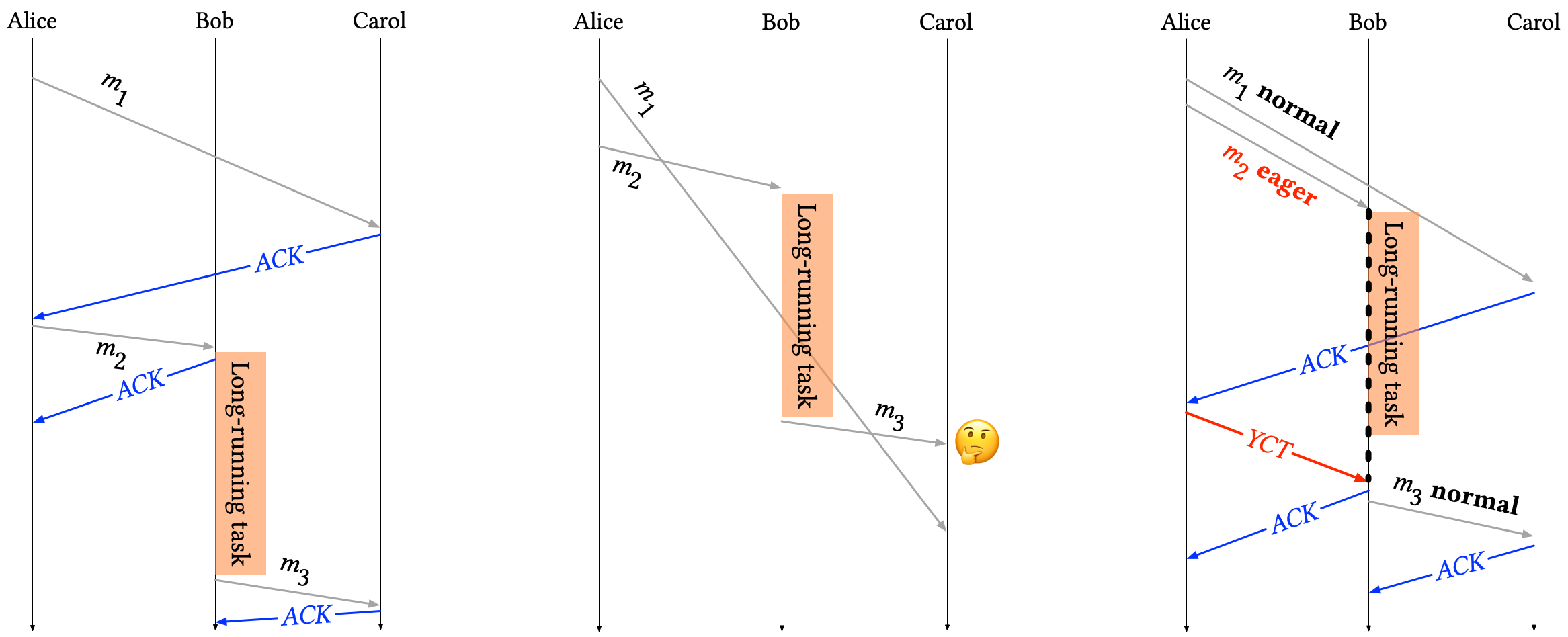}

\caption{In our example execution, the delay imposed by the MFSS protocol is especially problematic (left) if Bob has a long-running job that he cannot begin until receiving $m_2$. However, eagerly sending $m_2$ in a naive way could lead to a violation of causal delivery (center). The Cykas protocol introduces the \EagerSend and \YCT message types, which can be used to let Bob get a head start on the long-running job (right).}
\label{fig:causal-delivery-sender-side-long-running}

\end{figure*}

This section introduces the Cykas protocol.  We motivate Cykas in \Cref{subsec:buffer_issue} by considering a problem with the traditional sender-side approach and explaining at a high level how the Cykas protocol mitigates the problem. \Cref{subsec:cykas_protocol_behavior} describes the Cykas protocol in detail, and \Cref{subsec:sending-in-secret-mode} discusses the reasoning behind one of our protocol design decisions.

\subsection{Motivation: Bob's Long-Running Job}\label{subsec:buffer_issue}

The key idea of the Mattern and Fünfrocken protocol~\cite{mattern2005non} (which we will henceforth call the \emph{MFSS} protocol, for ``Mattern-Fünfrocken sender-side''\footnote{Mattern and Fünfrocken~\cite{mattern2005non} called their protocol the ``buffer protocol'', but we call it MFSS to avoid confusion with other protocols that also involve buffers.}) is that after sending a message, a process must wait for an acknowledgment from the recipient before sending another message.  There are therefore two message types in the MFSS protocol: \emph{normal} messages and \emph{acknowledgement} messages.  On each process, outgoing normal messages go into a FIFO queue, the \emph{output buffer}.  On the recipient's end, messages are received and delivered in receipt order.  A process may place a normal message into the output buffer and then continue running.  Before dequeuing and sending a message in its output buffer, a process waits to receive an acknowledgment message for the sender's last sent message.  For instance, in \Cref{fig:causal-delivery-intro} (right), $m_2$ remains in Alice's output buffer until Alice receives an acknowledgment from Carol for $m_1$.  Acknowledgment messages do not go through the output buffer; they are sent immediately in response to a normal message being received and delivered.

Compared with a receiver-side approach, the MFSS protocol imposes unnecessary delays on delivery of intervening messages. For instance, in \Cref{fig:causal-delivery-intro} (right), the delivery of $m_2$ is delayed at Bob (due to $m_2$ being sent later).
This delay is particularly problematic if Bob has some \emph{long-running job} to do as a result of getting $m_2$, as shown in \Cref{fig:causal-delivery-sender-side-long-running} (left).  Suppose that Bob cannot get started on the job until receiving $m_2$, and he must complete it before sending $m_3$.
Unfortunately, in the MFSS protocol, Alice cannot send $m_2$ until she receives the \ACK for $m_1$, and so Bob would have to wait to start on the long-running job.
But if Alice sent $m_2$ before getting the \ACK for $m_1$, hoping to give Bob a head start, we would risk violating causal delivery since Bob's $m_3$ could still arrive at Carol before $m_1$, as shown in \Cref{fig:causal-delivery-sender-side-long-running} (center).
Hence, Bob must be prevented from sending $m_3$ too early, but he should be allowed to start the long-running job early, because it only affects his internal state. What we want, then, is a way to let Bob receive $m_2$ and get started early on the long-running job, but without the risk of Bob sending $m_3$ too early.

The Cykas protocol lets us sidestep the conservative behavior of the MFSS protocol by differentiating two ways of sending messages: \NormalSend and \EagerSend. A \NormalSend works the same as in the MFSS protocol; it contains a payload and will be delivered as soon as it is received. An \EagerSend also contains a payload and will be delivered upon receipt. However, upon delivering an \EagerSend message sent from a process $p_i$, the recipient must ``keep secret'' the information that it received in the message \emph{until} it receives a follow-up message from $p_i$ that we call a \YCT message, which is short for ``you can tell.''

In all, there are four types of messages in the Cykas protocol: \NormalSend, \EagerSend, \YCT, and \ACK. Receipt of a \NormalSend imposes no restrictions on the receiving process. Upon receiving an \EagerSend message, however, a process goes into \emph{secret mode}.  In secret mode, a process may take internal actions and receive messages, but \emph{cannot send any messages} except \ACK and \YCT. A process that sends an \EagerSend must follow it with a corresponding \YCT message to the same recipient, which is used to unblock the recipient from sending messages. A process leaves secret mode, returning to \emph{normal mode}, when it has received \YCT messages for any \EagerSend messages it previously received. Finally, a process that receives any of the above message types sends \ACK back to the sender to acknowledge the receipt and delivery of the message.

\Cref{fig:causal-delivery-sender-side-long-running} (right) shows an example run of the Cykas protocol.
First, Alice sends $m_1$ as a \NormalSend message to Carol. 
Instead of waiting for the \ACK of that message, Alice sends $m_2$ as an \EagerSend to Bob, triggering his long-running job.
Bob receives and delivers $m_2$ and transitions to secret mode, indicated by the dashed line on Bob's process line.
After finishing his long-running job, Bob is still (briefly) prevented from sending $m_3$ to Carol, since he is running in secret mode and hasn't received \YCT yet.
After receiving \ACK{}s of both the \NormalSend message to Carol and the \EagerSend message to Bob, Alice sends a \YCT message to Bob. 
At this point, Bob transitions back to normal mode 
and can send $m_3$ to Carol without risk of violating causality. 

\subsection{Cykas in Detail}\label{subsec:cykas_protocol_behavior}

Having established intuition for the Cykas protocol's behavior, we now describe the protocol precisely. We assume a fixed set of $N$ processes $p_1, ..., p_N$ running the protocol, which we call \emph{Cykas processes}. We further assume that message delivery is reliable.

\subsubsection{Local data structures} Each Cykas process maintains the following local data structures:
\begin{itemize}
    \item \OB: the \emph{output buffer}, a FIFO queue of unsent messages, initially empty. Whenever a Cykas process receives an \ACK or \YCT message, and whenever a new message is inserted into the \OB, the Cykas process will determine whether the message at the \emph{head} of the $ OB $ can be sent, and if so, whether it can be a \NormalSend or must be an \EagerSend.
    \item \UNACK: a bitvector of length $N$. If a Cykas process $p_i$ has $\UNACK[j] = 1$, it means $p_i$ has sent a message to process $p_j$ and hasn't yet received an \ACK for that message. Initially, all entries in \UNACK are 0 because no messages have been sent.
    \item \MODE: a nonnegative integer, initially $0$, representing the number of \YCT messages currently being awaited. If $\MODE=0$, the process is in \emph{normal mode}; otherwise it is in \emph{secret mode}.
    \item \EAGERSENT: a map, initially empty, from process identifiers to \emph{queues of bitvectors}, used to track when to send a \YCT message for each \EagerSend. Each process keeps an entry in its \EAGERSENT{} map for each process $p_j$ to which it has sent at least one \EagerSend message. Process identifier $p_j$ maps to a queue of length-$N$ bitvectors, one for each \EagerSend that has been sent to $p_j$. In each bitvector, the bit at index $j'$ represents whether the process is waiting for an \ACK from $p_{j'}$ to send a \YCT. For example, in the execution in \Cref{fig:causal-delivery-sender-side-long-running} (right), Alice's \EAGERSENT{} map immediately after sending $m_2$ to Bob is $\{ \textit{Bob} \rightarrow \langle [ 0 0 1 ] \rangle \}$, indicating that Alice is waiting to get an \ACK from Carol before she can send a \YCT to Bob.
\end{itemize}

\subsubsection{Protocol operation}

Cykas exposes two operations to clients, \emph{application-send} and \emph{deliver}, where application-send is triggered by the client and deliver is a callback operation triggered by the underlying protocol.  Consider a single message $m$ passing between $p_{i}$ and $p_{j}$. The following operations occur:
\begin{itemize}
    \item \emph{application-send} (\Cref{algo:cykas_application_send}): process $p_{i}$ initiates sending $m$ by inserting $m$ into its \OB.
    \item \emph{network-send} (\Cref{algo:cykas_process_send_m}): message $m$ leaves the \OB, wrapped in either \NormalSend or \EagerSend, and is transmitted to $p_{j}$. We assume that the underlying network transport mechanism provides a {\sc NetworkSend} function that takes a message and a destination. The Cykas protocol controls if and when a network-send can happen, and what type of wrapper to be used.
    \item \emph{receive-deliver} (\Cref{algo:cykas_process_receive_m}): $m$, wrapped in \NormalSend or \EagerSend, arrives at $p_{j}$. $p_{j}$ unwraps $m$, immediately delivers its payload to the application, and network-sends an \ACK message back to $p_{i}$.
    \item \emph{receive-\ACK} (\Cref{algo:cykas_process_receive_ack}): an \ACK message sent from $p_{j}$ is received by $p_{i}$. The protocol will then attempt to send any application messages and \YCT{}s that had been blocked on receipt of the \ACK.  \ACK messages are not delivered to the application, only used by the Cykas protocol.
    \item \emph{receive-\YCT} (\Cref{algo:cykas_process_receive_yct}): if $m$ was an \EagerSend, $p_i$ will eventually follow up with a \YCT message to $p_j$. This operation runs when a \YCT message is received. \YCT messages are not delivered to the application, only used by the Cykas protocol.
\end{itemize}

\begin{algorithm}
  \caption{application-send message $m$}\label{algo:cykas_application_send}
  \begin{algorithmic}[1]

    \State \OB.\push($m$)

    \State \Call{try_send_message}{}

  \end{algorithmic}
\end{algorithm}

\begin{algorithm}
    \caption{attempt to network-send message $m$}\label{algo:cykas_process_send_m}
    \begin{algorithmic}[1]
    \Function{try_send_message}{}
        \If{$\MODE > 0$}
            \Return \Comment{no send in secret mode}
        \EndIf

        \While{$\lnot \OB.\isempty$}
            \State $m \gets \OB.\peek$
            \State $i \gets m.\sender, j \gets m.\receiver$
            \If{$\UNACK[j] = 1$}
                \Return \Comment{cannot send}
            \EndIf

            \If{\UNACK.\allzero}
            
                \State \Call{NetworkSend}{\textit{NormalSend}($m$), $m$.\receiver}
            \Else
                \State \Call{NetworkSend}{\textit{EagerSend}($m$), $m$.\receiver}
                \State $\EAGERSENT[j].\push(\UNACK)$
            \EndIf

            \State $\OB.\pop$
            \State $\UNACK[j] \gets 1$
            
        \EndWhile
    \EndFunction

    \end{algorithmic}
\end{algorithm}

\begin{algorithm}
    \caption{receive-deliver message $m$}\label{algo:cykas_process_receive_m}
    \begin{algorithmic}[1]
  
        \If{$m$.\iseagersend}

            $\MODE \gets \MODE + 1$  \Comment{now in secret mode}
        \EndIf

        \State \Call{Deliver}{$m$}
        \State \Call{NetworkSend}{\textit{ACK}($m$), $m$.\sender}
    \end{algorithmic}
\end{algorithm}

\begin{algorithm}
    \caption{receive-\ACK}\label{algo:cykas_process_receive_ack}
    \begin{algorithmic}[1]

        \State $i \gets \ACK.\sender$
        \State $\UNACK[i] \gets 0$
        \For{\textbf{each} $(\esreceiver, \bitvecqueue)$ \textbf{in} \EAGERSENT}
            \For{\textbf{each} \bitvec \textbf{in} \bitvecqueue}
                \State $\bitvec[i] \gets 0$
            \EndFor
            \State \Call{try_send_yct}{\esreceiver}
            \State \Call{try_send_message}{}
        \EndFor \\

        \Function{try_send_yct}{\receiver}
            \While{$\lnot \EAGERSENT[\receiver].\isempty$}
                \State $\bitvec \gets \EAGERSENT[\receiver].\peek$
                \If{$\lnot \bitvec.\allzero \lor \UNACK[\receiver] = 1$}
                    \State \textbf{break}
                \EndIf
                
                \State \Call{NetworkSend}{\YCT, \receiver}
                \State \EAGERSENT[\receiver].\pop
            \EndWhile
        \EndFunction
    \end{algorithmic}
\end{algorithm}

\begin{algorithm}
    \caption{receive-\YCT}\label{algo:cykas_process_receive_yct}
    \begin{algorithmic}[1]

        \State $MODE \gets MODE - 1$
        \State \Call{try_send_message}{}
  
    \end{algorithmic}
\end{algorithm}

The behavior shown in \Cref{algo:cykas_process_send_m} is crucial to the protocol on the sender's side. A Cykas process runs in one of two modes: \emph{normal mode} and \emph{secret mode}.  Cykas processes begin in normal mode. If a process $p$ running in normal mode network-sends a message to a recipient but has not received an \ACK, $p$ cannot network-send a message to the same recipient until receiving the \ACK. However, it may network-send a message to another recipient. If $p$ has network-sent any un-\ACK{}ed  messages, the new message will be an \EagerSend (e.g., $m_2$ in \Cref{fig:causal-delivery-sender-side-long-running}); otherwise, it will be a \NormalSend (e.g., $m_1$ and $m_3$ in \Cref{fig:causal-delivery-sender-side-long-running}).

When a process receives an \EagerSend from another process (\Cref{algo:cykas_process_receive_m}, line 1), it transitions to secret mode (if it is not already in secret mode), in which neither \NormalSend nor \EagerSend are allowed; it can only network-send \ACK and \YCT messages. Once all \EagerSend{}s that the process has received are unblocked by corresponding \YCT{}s, the process transitions back to normal mode, in which it can freely send messages again.

On the recipient's side, the behavior shown in \Cref{algo:cykas_process_receive_ack} is crucial to the protocol.  When a process receives an \ACK, it network-sends any \YCT messages it is able to send, and attempts to send any new application messages queued in its \OB.

Although both \NormalSend and \EagerSend messages are part of the Cykas protocol internally, only the \emph{application-send} interface is exposed to clients of the protocol. The choice between \NormalSend and \EagerSend is made by the protocol itself: a process running in normal mode will send messages normally when possible, and eagerly otherwise. The generic \emph{application-send} interface makes it possible to use the Cykas protocol as a drop-in replacement for existing causal delivery protocols.

\subsection{Why disallow sending in secret mode?}
\label{subsec:sending-in-secret-mode}

\begin{figure}[b]
    \centering
    \includegraphics[width=0.35\textwidth]{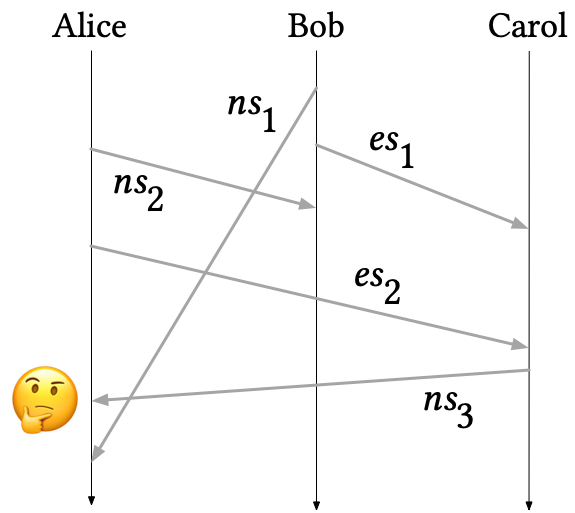}
    \caption{If sending is allowed in secret mode, a causality violation can occur.  \ACK and \YCT messages are elided in this figure.}
    \label{fig:origin_cykas_violation}
\end{figure}

A Cykas process is considerably constrained when it is running in secret mode (that is, when it has received at least one \EagerSend message and not yet received the corresponding \YCT message). One might wonder if a process in secret mode should be able to at least correspond with the sender of its \emph{most recently delivered} \EagerSend message.  (After all, if someone tells you a secret, you might expect to be able to safely tell a secret back to them.)
However, allowing communication with an eager sender would admit causality violations.  \Cref{fig:origin_cykas_violation} illustrates the problem with this more permissive approach. Here, messages with names $\mathit{ns}_i$ are \NormalSend{}s, while messages with names $\mathit{es}_i$ are \EagerSend{}s. For simplicity, we elide \ACK and \YCT messages.

Bob sends $\mathit{ns}_1$ to Alice, and then sends $\mathit{es}_1$ to Carol.  Carol delivers $\mathit{es}_1$ and enters secret mode, during which she can only send messages to Bob. Meanwhile, Alice sends $\mathit{ns}_2$ to Bob and then $\mathit{es}_2$ to Carol. Carol delivers $\mathit{es_2}$ and can now only send messages to Alice. While still in secret mode, Carol sends $\mathit{ns}_3$ to Alice, who delivers it.
This violates causal delivery, since the send of $\mathit{ns}_1$ happened before the send of $\mathit{ns}_3$ (send($\mathit{ns}_1$) $\rightarrow$ send($\mathit{es}_1$) $\rightarrow$ deliver($\mathit{es}_1$) $\rightarrow$ send($\mathit{es}_3$)), but Alice delivers $\mathit{ns}_3$ before $\mathit{ns}_1$. (Alice's message $\mathit{ns}_2$ is not directly involved in the causality violation, but is present in the execution because it is the reason why $\mathit{es}_2$ must be sent eagerly.) We found this counterexample using the Stateright model checker, which we discuss further in \Cref{sec:verification}.

\section{Proof of Correctness}\label{sec:correctness}

In this section, we consider the correctness of the Cykas protocol in terms of its safety (\Cref{subsec:safety_proof}) and liveness (\Cref{subsec:liveness_proof}) properties.  The safety property we wish to ensure is causal delivery, i.e., messages are never delivered in an order that violates the causal order.  The liveness property we wish to ensure is that, assuming a reliable network, all messages will eventually be delivered (rather than, for example, being delayed indefinitely in a sender's output buffer).
We borrow some notation previously established by Mattern and Fünfrocken~\cite{mattern2005non} in our proof, with a few differences specific to our setting:
\newcommand{\ns}{\mathit{ns}}
\newcommand{\rd}{\mathit{rd}}
\begin{itemize}
  \item $a \sim b$: events $a$ and $b$ happen at the same process.
  \item $a \rightarrow b$: event $a$ happened before event $b$, in the sense of Lamport's happens-before relation~\cite{lamport-clocks}. Note that $\to$ is a transitive relation.
  \item $OB_{i}$: output buffer of process $p_{i}$.
  \item $s_{i}^{m}$: application-send of message $m$ at $p_{i}$.
  \item $\ns_{i}^{m}$: network-send of message $m$ at $OB_{i}$.
  \item $\rd_{i}^{m}$: receive-deliver of message $m$ at $p_{i}$.
  \item $r^{\ACK(m)}_{i}$: receive \ACK of message $m$ at $p_{i}$.
  \item $r^{\YCT}_{i}$: receive \YCT at $p_{i}$.    
  \item $M_{i \rightarrow j}$: a message sent from $p_{i}$ to $p_{j}$.
\end{itemize}

 If it is clear from the context which process an event takes place on, we will omit the subscript, e.g., we will write just $s^{m}$ instead of $s_{i}^{m}$.

\subsection{Safety}\label{subsec:safety_proof}

We define causally ordered delivery as follows: given an execution with distinct send events $s^{m}$, $s^{m'}$ and their corresponding receive-deliver events $\rd^{m}$, $\rd^{m'}$, the following always holds:
\begin{gather}
  s^{m} \rightarrow s^{m'} \land \rd^{m} \sim \rd^{m'} \Longrightarrow \rd^{m} \rightarrow \rd^{m'}
  \tag{Causally Ordered Delivery}
\end{gather}

\noindent That is, if $m$ is sent before $m'$ and $m$ and $m'$ are received and delivered at the same process, then the delivery of $m$ precedes the delivery of $m'$.

\begin{proposition}\label{prop:send_after_deliver}
    Any execution of Cykas with distinct send events $s^{m}$, $s^{m'}$ and corresponding receive-deliver events $\rd^{m}$, $\rd^{m'}$
    satisfies causally ordered delivery.
\end{proposition}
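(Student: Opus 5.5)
The plan is to argue by contradiction on a minimal violating pair, reducing causal delivery to a statement about \ACK{}s. Suppose $s^m \rightarrow s^{m'}$ and $\rd^m \sim \rd^{m'}$ (both at some $p_k$), but $\rd^{m'} \rightarrow \rd^m$. Since $p_k$ delivers in receipt order, it suffices to show the stronger invariant
\[
  s^{m}_i \rightarrow s^{m'} \;\Longrightarrow\; r^{\ACK(m)}_i \rightarrow \ns^{m'} \;\text{ or }\; \rd^m \rightarrow \rd^{m'}.
\]
In practice I would aim for the cleaner claim that \emph{$\rd^m$ happens before the externally visible network-send $\ns^{m'}$}. The \ACK from $p_k$ for $m$ is sent after $\rd^m$, so any chain through that \ACK gives $\rd^m \rightarrow \ns^{m'} \rightarrow \rd^{m'}$.

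First I would decompose the causal chain $s^m \rightarrow s^{m'}$ into a sequence of process-local steps joined by message hops $M_{a\to b}$. Each hop corresponds to a delivered \NormalSend or \EagerSend; \ACK{}s and \YCT{}s are invisible to the application. Because the \OB is FIFO and \textsc{try\_send\_message} only ever sends the head, $s^m \rightarrow s^{m'}$ at the same process gives $\ns^m \rightarrow \ns^{m'}$. So it is enough to handle the chain starting from $\ns^m_i$.

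The core lemma I would prove by induction along the chain is this: if $\ns^m_i \rightarrow e$, where $e$ is a network-send of a \NormalSend or \EagerSend at a process that is in normal mode at $e$, then $\rd^m \rightarrow e$. Two cases drive the induction.

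\textbf{Same process ($p_i$ itself).} If $\UNACK$ is not all zero at the later send, the next send is eager. Otherwise the \ACK for $m$ has arrived, so $\rd^m \rightarrow r^{\ACK(m)} \rightarrow e$.

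\textbf{Across a hop.} Suppose the chain passes through a message $x$ sent by $p_a$ at a time when $m$ (or a predecessor) is still un-\ACK{}ed in the relevant sense. Then $x$ was an \EagerSend, and its bitvector in $\EAGERSENT$ recorded the outstanding \ACK{}s. The recipient $p_b$ enters secret mode and can perform no network-send until the corresponding \YCT arrives. That \YCT is sent only once its bitvector is all zero, i.e.\ after the \ACK for every message outstanding at the time $x$ was sent.

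The bitvector condition therefore transfers the obligation: the \YCT-receipt at $p_b$ happens after $\rd^m$, or after an \ACK that inductively dominates $\rd^m$. The induction measure is the position along the chain, and the hypothesis is that every normal-mode send reached from $\ns^m$ is preceded by $\rd^m$.

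\textbf{Main obstacle.} I expect the hardest step to be matching \YCT{}s to \EagerSend{}s. $\MODE$ is only a counter, and a recipient can hold several overlapping eager messages from different senders, as in \Cref{fig:origin_cykas_violation}. I would argue that $p_b$ leaves secret mode only when $\MODE = 0$, hence only after receiving \emph{all} \YCT{}s for eager messages delivered so far, including the one for $x$. The per-receiver FIFO of bitvectors, together with the check $\UNACK[\receiver]=0$, ensures each \YCT pops exactly the oldest pending entry.

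A secondary subtlety is transitivity through chains of eager sends, where $p_b$'s own later sends may themselves be eager. There I would strengthen the induction hypothesis to cover sends of $p_b$ made while $\UNACK$ is nonzero, relying again on the fact that $p_b$'s own \YCT{}s are gated on its own \ACK{}s.
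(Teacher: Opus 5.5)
Your ingredients match the paper's proof. The FIFO \OB orders $\ns_i^m$ before later sends of $p_i$. While $m$ is un-\ACK{}ed, $p_i$'s later sends are eager with bit $k$ set in their \EAGERSENT{} snapshot. The eager recipient is blocked until its \YCT, and that \YCT waits for $\ACK(m)$. Your handling of \YCT{}s against a single \MODE counter is more careful than the paper, which only asserts that $p_i$ cannot send a \YCT until $\UNACK_i$ is all zero.

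The gap is that the core lemma you plan to prove by induction is false as stated. Every network-send of a \NormalSend or \EagerSend occurs with $\MODE = 0$ (line 2 of \textsc{try\_send\_message}). So ``in normal mode at $e$'' is vacuous, and the lemma asserts $\rd^m \rightarrow e$ for \emph{every} application send $e$ with $\ns_i^m \rightarrow e$. Counterexample: $p_i$ sends $m$ to $p_k$ and immediately eager-sends $z$ to some $p_b$; then $\ns_i^m \rightarrow \ns_i^z$ but $\rd_k^m \not\rightarrow \ns_i^z$. Your own same-process case concedes this, since ``the next send is eager'' does not give $\rd^m \rightarrow e$, so the induction fails at its base. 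That case also tests the wrong condition: for $m'$ sent by $p_i$ to $p_k$, what matters is $\UNACK_i[k]=0$, not $\UNACK_i$ being all zero, because $m'$ may itself be an \EagerSend. Finally, your hop step for a sender $p_a \neq p_i$, and the proposed strengthening via $p_b$'s own \ACK{}s, use the wrong mechanism. A downstream process's \UNACK has no bit tied to $m$, so ``un-\ACK{}ed in the relevant sense'' is undefined there.

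The missing observation is that no induction over hops is needed: only the first application message $z_0$ that $p_i$ sends on the chain matters. Either $r_i^{\ACK(m)}$ precedes $\ns_i^{z_0}$, or $z_0$ is an \EagerSend whose snapshot has bit $k$ set. In the second case, $m$ is $p_i$'s only outstanding message to $p_k$, so that bit is cleared only by $\ACK(m)$. The recipient of $z_0$ can then return to $\MODE=0$ only after receiving a \YCT from $p_i$ sent no earlier than the one matching $z_0$. This holds because, per sender, \YCT{}s received never outnumber \EagerSend{}s received (thanks to the $\UNACK[\receiver]=0$ check), so \MODE is a sum of nonnegative per-sender terms. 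Hence every network-send by that recipient after $\rd^{z_0}$ is causally after $\rd_k^m$. So is every later event of the chain, including $\ns_j^{m'}$, whether or not later sends are eager. If you restate the lemma for sends destined to $p_k$ (or reached through at least one application-message hop out of $p_i$), your plan becomes the paper's one-hop argument, run directly rather than by contradiction.
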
 

The proof idea relies on two key aspects of the Cykas protocol. First, if a process $p_i$ has not received an $\ACK$ for some message, then any subsequent messages sent by $p_i$ are \emph{eager} and thus must block the receiving processes. Second, if $p_i$ sends $m$ to process $p_j$, and $p_i$ has not received an $\ACK(m)$, then $p_i$ cannot send any more messages (eager or otherwise) to $p_j$ until it receives the missing $\ACK(m)$.

\begin{proof}
  Let $M_{i \to k}$ and $M_{j \to k}'$ be two messages in a given execution satisfying the hypothesis $s^{M} \rightarrow s^{M'} \land \rd^{M} \sim \rd^{M'}$. Note the $i = j$ case is immediate from \Cref{algo:cykas_process_send_m}, since if $p_i$ sends $M$ to $p_k$, then $\UNACK_{i}[k] \leftarrow 1$, and by Line 7, process $p_i$ cannot send $M'$  to $p_k$ until $p_i$ receives an $\ACK(M)$, which implies an $\rd_{k}^{M}$ event (\Cref{algo:cykas_process_receive_m}), and therefore causally ordered delivery.

  For the $i \neq j$ case, assume towards contradiction that we have a causal violation. That is, process $p_k$ executed $\rd^{M'}$, then $\rd^{M}$, so that
  $\rd^{M'} \to \rd^{M}$.

  Since $s^{M} \to s^{M'}$, we have the network sends $\ns_{i}^{M} \to \ns_{j}^{M'}$.
  Since network-sending $M$ causally precedes network-sending $M'$, by transitivity of $\to$, there must be a causal chain of events
  on $\ell$ processes where $p_i$ sent some message $z_{0}$ to a process $p_{i_1'}$, who delivered $z_0$, then sent some
  message $z_1$ to a process $p_{i_2'}$, who delivered $z_1$, and so on, until a process $p_{i_\ell'}$ sent some message $z_\ell$ to
  $p_j$, thus transitively carrying $M$ as a potential cause for $M'$. More precisely, we have the causal chain
  \[(\ns_{i}^{M} \to \ns_{i}^{z_0})  \to (\rd_{i_1'}^{z_0} \to \ns_{i_1'}^{z_1}) \to \cdots \to \]
  \[(\rd_{i_\ell'}^{z_{\ell-1}} \to \ns_{i_\ell'}^{z_\ell}) \to (\rd_{j}^{z_\ell} \to \ns_{j}^{M'}) \to \rd_{k}^{M'}.\]
  Since by assumption we have $\rd_{k}^{M'} \to \rd_{k}^{M}$, there is no $\rd_{k}^{M}$ event in the above causal chain,
  which means $p_k$ never executed \Cref{algo:cykas_process_receive_m} for $M$ during the above sequence of events.
  Therefore, process $p_i$ never received an $\ACK(M)$ from $p_k$, and by \Cref{algo:cykas_process_send_m}, we have $\UNACK_{i}[k] = 1$
  during the above sequence of events.

  From \Cref{algo:cykas_process_send_m}, it must be the case that $p_i$ \emph{eagerly} sent $z_0$ to $p_{i_1'}$,
  setting $\MODE > 0$ and placing $p_{i_1'}$ in \emph{secret mode}, by \Cref{algo:cykas_process_receive_m}.
  Observe that line 2 of \Cref{algo:cykas_process_send_m} disallows any further network-sends until the process receives a $\YCT$ message.
  Therefore, $p_{i_1'}$ must have executed a $\rd_{i_1'}^{\YCT}$ event before the $\ns_{i_1'}^{z_1}$ event.

  But this is a contradiction, since by \Cref{algo:cykas_process_receive_ack}, process $p_i$ cannot send a $\YCT$ until $\UNACK_{i}$ is all zeroes,
  which cannot happen since $p_k$ does not execute $\rd^{\ACK(M)}$ in the above chain of events, thus establishing that a causal violation
  was not possible.
\end{proof}

\subsection{Liveness}\label{subsec:liveness_proof}
As we said at the beginning of this section, liveness of Cykas means every message will eventually be delivered. To prove liveness, we are obligated to assume that the network is reliable (i.e., messages that are sent will be delivered eventually). Given that assumption, we want to prove the following proposition:

\begin{proposition}\label{prop:cykas_eventually_send}
    All messages in an \OB will be network-sent eventually.
\end{proposition}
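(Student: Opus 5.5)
We argue by contradiction, assuming reliable delivery and that the application issues only finitely many application-sends after some point (or, more generally, we fix a message $m$ at the head of some $\OB_i$ and show it eventually leaves). Message $m$ at the head of $\OB_i$ can be blocked for only two reasons in \Cref{algo:cykas_process_send_m}: either $\MODE_i > 0$ (line 2), or $\UNACK_i[j] = 1$ for its receiver $p_j$ (line 7). Moreover, \textsc{try\_send\_message} is re-invoked on every receipt of an \ACK and every receipt of a \YCT. So it suffices to show that (a) every network-sent \NormalSend or \EagerSend eventually produces an \ACK at its sender, and (b) every \EagerSend eventually produces a matching \YCT at its recipient. Given (a) and (b), $\UNACK_i[j]$ is eventually reset to $0$ and $\MODE_i$ eventually returns to $0$. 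Provided no new messages are sent to or eagerly into $p_i$ in between, the next call to \textsc{try\_send\_message} then sends $m$. By FIFO induction on the position of a message in the \OB, every message is eventually sent.

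Claim (a) is immediate. By reliability, a sent message is received, and \Cref{algo:cykas_process_receive_m} unconditionally network-sends an \ACK. \ACK{}s are never buffered or blocked by secret mode, and the \ACK is itself reliably delivered. Claim (b) is the core of the argument. When $p_i$ eagerly sends to $p_j$, it pushes a copy of $\UNACK_i$ onto $\EAGERSENT_i[j]$. Every bit set in that copy, and $\UNACK_i[j]$ itself, corresponds to an already network-sent message, so by (a) an \ACK for each arrives eventually. Each receipt of an \ACK clears the corresponding bit in every queued bitvector and calls \textsc{try\_send\_yct} on each queue. One subtlety is that $\UNACK_i[j]$ could be set again by a later send to $p_j$ before the \YCT goes out. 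However, any later message to $p_j$ is sent only after the \ACK for the earlier one arrives, and at that receipt the \YCT check runs before \textsc{try\_send\_message} in the loop body. We therefore argue that the head bitvector of each queue is all zero at the moment of that \ACK receipt, or becomes so at a later \ACK receipt, and that $\UNACK_i[j]=0$ then. We need to check carefully that the ordering inside \Cref{algo:cykas_process_receive_ack} guarantees this. By induction along the queue, every pushed bitvector eventually triggers a \YCT. Note that \YCT sending does not depend on $\MODE_i$, so a sender in secret mode can still release its recipients. This is exactly what rules out cyclic waiting among secret processes.

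The main obstacle is starvation and interleaving. $p_i$ might be put back into secret mode by a fresh \EagerSend just as its $\MODE$ drops to zero. Similarly, $\UNACK_i[j]$ might be re-set by an interleaved send before the \YCT check fires. I would first try to handle this with a well-founded measure: under the finiteness assumption on application-sends, the total number of messages still to be network-sent, plus outstanding \ACK{}s and \YCT{}s, strictly decreases along the fair execution. Within this measure I would use the observation that $\MODE_i$ counts only \EagerSend{}s already received, each of which is matched by a \YCT by (b). If that fails without finiteness, I would instead assume fair scheduling of \textsc{try\_send\_message} at points where $\MODE_i=0$. Since the paper's own TODO notes a ``starve problem'', I expect some such fairness or finiteness hypothesis to be genuinely necessary.
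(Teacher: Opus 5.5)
Your decomposition is the paper's own. Your (a) is \Cref{lemma:unack_liveness}, your (b) is \Cref{lemma:eager_sent_liveness}, and your reduction to $\MODE_i=0$ and $\UNACK_i[j]=0$ is \Cref{lemma:mode_liveness} plus the closing paragraph. You are right that this needs more than the paper says. The two conditions must hold \emph{simultaneously} at a call of \textsc{try\_send\_message}. With unboundedly many sends, a reliable but adversarial network can keep $\MODE_i>0$ forever, e.g.\ two senders alternately eager-sending to $p_i$, each \EagerSend eventually matched by its \YCT. So the paper's step ``every \EagerSend gets its \YCT, hence \MODE eventually becomes $0$'' needs a finiteness hypothesis.

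However, the route you sketch for closing this would fail, for three reasons.
\begin{itemize}
\item The ordering inside \Cref{algo:cykas_process_receive_ack} does not yield (b). \textsc{try\_send\_message} is called in \emph{every} iteration of the loop over \EAGERSENT{}. An iteration for another receiver that precedes $p_j$'s can network-send a new message to $p_j$, setting $\UNACK_i[j]=1$ before \textsc{try\_send\_yct}$(p_j)$ runs. Likewise, if the head bitvector still has a bit for some $p_l$ when $\ACK(e)$ arrives, the \textsc{try\_send\_message} in that same iteration may send to $p_j$ again. Then $\UNACK_i[j]=1$ by the time $p_l$'s \ACK clears the bitvector. With unbounded sends to $p_j$ this can recur forever.
\item Your measure is not monotone: an eager network-send removes one \OB entry but adds an outstanding \ACK and an outstanding \YCT. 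Even a correct termination measure only shows that the run quiesces, not that the \OB{}s are empty when it does. That non-deadlock claim is the real content.
\item Fair scheduling of \textsc{try\_send\_message} buys nothing. The receive-\YCT handler already calls it whenever $\MODE_i$ can drop to $0$; the obstruction is that $\MODE_i$ may never get there.
\end{itemize}

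Under finiteness, the missing ideas are short.
\begin{enumerate}
\item While $m$ heads $\OB_i$, $p_i$ network-sends no application message at all, because the while loop never passes the head. So $\UNACK_i[j]$ is cleared by at most one more \ACK and then stays $0$. Your proviso therefore reduces to ``no further \EagerSend{}s into $p_i$.''
\item For (b), take the \emph{last} \ACK receipt at the eager sender. An application network-send during that handler would cause a later \ACK, so none occurs. Hence \UNACK stays all zero throughout, and each queued bitvector is all zero when its queue is visited, because every $1$-bit records a message whose \ACK arrives after the push. So \textsc{try\_send\_yct} flushes every queue. Thus every \EagerSend into $p_i$ gets its \YCT, and after the last \YCT receipt $\MODE_i=0$ for good.
\item Then $m$ is sent at the later of that \YCT receipt and the \ACK clearing $\UNACK_i[j]$ (or when $m$ reaches the head), since these events invoke \textsc{try\_send\_message}.
\end{enumerate}
One caveat: for the \ACK handler, step (iii) holds only under the prose description. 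The same goes for your claim that \textsc{try\_send\_message} runs on every \ACK receipt. In \Cref{algo:cykas_process_receive_ack} as printed, the call sits inside the loop over \EAGERSENT{} and is skipped entirely when that map is empty, i.e., for a process that has never sent eagerly. The paper's proof tacitly relies on the same prose reading.
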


From \Cref{algo:cykas_process_send_m}, we see that for a message $M_{i \rightarrow j}$, its network-send event $\ns^{M}$ requires that $\UNACK[j] = 0$ and $\MODE = 0$. We will prove that in a reliable network, this will eventually be the case.

\begin{lemma}\label{lemma:unack_liveness}
    Every bit in \UNACK will be set to $0$ eventually.
\end{lemma}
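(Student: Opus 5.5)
The plan is to follow the lifecycle of a single bit $\UNACK_i[j]$ at a process $p_i$. Only line 13 of \Cref{algo:cykas_process_send_m} sets it to $1$, and that happens exactly when $p_i$ network-sends some message $M_{i \to j}$, wrapped as either a \NormalSend or an \EagerSend. Only line 2 of \Cref{algo:cykas_process_receive_ack} resets it to $0$, when an \ACK arrives from $p_j$. It therefore suffices to show that every network-sent application message $M_{i\to j}$ eventually causes an \ACK from $p_j$ to reach $p_i$.

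First I would establish a one-to-one correspondence between $1$-phases of the bit and outstanding messages. Line 7 of \Cref{algo:cykas_process_send_m} forbids a second network-send to $p_j$ while $\UNACK_i[j]=1$. Hence there is at most one un-\ACK{}ed message from $p_i$ to $p_j$ in flight at any time. Its \ACK, whenever it arrives, is the one that clears the bit, and no later send can set the bit before that \ACK is received.

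Second, I would show that the \ACK is produced unconditionally. By reliability, $M$ eventually arrives at $p_j$. \Cref{algo:cykas_process_receive_m} then increments \MODE if $M$ is eager, calls \textsc{Deliver}, and immediately calls \textsc{NetworkSend} on $\ACK(M)$. The key point is that this path does not go through the \OB or \textsc{try\_send\_message}, so the guard on line 2 (secret mode) does not apply: \ACK{}s may be sent in secret mode, as the protocol description states explicitly. No \YCT, and no state of $p_j$, is required. Reliability then delivers $\ACK(M)$ to $p_i$, which executes \Cref{algo:cykas_process_receive_ack} and sets $\UNACK_i[j]\gets 0$.

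The main obstacle is conceptual rather than technical. I must make sure the argument does not circularly depend on \YCT progress or on $\MODE=0$, both of which are what \Cref{prop:cykas_eventually_send} is later trying to establish from this lemma. I would handle this by pointing out explicitly that the receive-deliver handler has no blocking condition. I would also note the standing assumptions the argument relies on: processes do not crash and handlers run to completion.

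The conclusion is then that every bit currently equal to $1$ is eventually reset to $0$. The statement should be read as saying that each bit does not remain $1$ forever. It does not claim that the bit stays $0$, since new sends may set it again.
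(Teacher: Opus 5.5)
Your proposal is correct and takes essentially the same approach as the paper. A bit of \UNACK is set only by a network-send and cleared by the returning \ACK, which the receive-deliver handler sends unconditionally, so reliable delivery guarantees every set bit is eventually cleared; your further observations (at most one outstanding message per destination, and the \ACK{} path bypassing the secret-mode guard so there is no circular dependence on \YCT{} progress) just make explicit what the paper's short proof leaves implicit.
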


\begin{proof}
    According to \Cref{algo:cykas_process_send_m} and \Cref{algo:cykas_process_receive_ack}, bits in \UNACK are set to $1$ by sending messages and will be set back to $0$ when receiving their corresponding \ACK{}s. From \Cref{algo:cykas_process_receive_m}, we know that an \ACK{} is network-sent as soon as a message is received. As we assume reliable delivery, \ACK{}s will be delivered eventually, which means every bit in \UNACK will be set to $0$ eventually.
\end{proof}

\begin{lemma}\label{lemma:eager_sent_liveness}
  Every \EagerSend{} will eventually be followed by its corresponding \YCT{}.
\end{lemma}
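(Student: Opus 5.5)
We prove the statement by following the bitvector that the protocol stores when an \EagerSend is issued, and showing that it is eventually popped from \EAGERSENT{} together with a network-send of \YCT. Suppose $p_i$ eagerly sends a message to $p_j$. By \Cref{algo:cykas_process_send_m}, this pushes a copy of $\UNACK_i$ onto $\EAGERSENT_i[j]$ and then sets $\UNACK_i[j] \gets 1$. \YCT{}s to $p_j$ are issued only by \textsc{try\_send\_yct}($p_j$). That function pops the queue in FIFO order, emitting exactly one \YCT per popped bitvector. So it suffices to show that every bitvector ever pushed onto $\EAGERSENT_i[j]$ is eventually popped. We argue by induction on its position in the queue, since a bitvector can only be popped after all bitvectors ahead of it.

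\textbf{Step 1: the head bitvector eventually becomes all zero.} The bits of a stored bitvector are never set to $1$ after the push; they are only cleared. The bit at index $k$ is cleared in \Cref{algo:cykas_process_receive_ack} whenever an \ACK from $p_k$ is received. A bit $k$ equal to $1$ in the snapshot means $\UNACK_i[k]=1$ at push time, i.e.\ there is an outstanding message to $p_k$. By \Cref{lemma:unack_liveness} (more precisely, by its proof: the \ACK for that outstanding message is sent upon receipt and reliably delivered), $p_i$ eventually executes receive-\ACK from $p_k$. That clears index $k$ in every stored bitvector, including ours. Since there are finitely many bits, the bitvector eventually becomes all zero.

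\textbf{Step 2: $\UNACK_i[j]$ must be $0$ at some invocation of \textsc{try\_send\_yct}($p_j$).} This is the main obstacle. The eager message itself set $\UNACK_i[j]=1$, and later messages to $p_j$ could set it to $1$ again, so we must show the condition is checked at a moment when it holds. Two points give this. First, the \ACK for the eager message arrives eventually by \Cref{lemma:unack_liveness}. Second, receive-\ACK first sets $\UNACK_i[j]\gets 0$ and then loops over all entries of \EAGERSENT{}, calling \textsc{try\_send\_yct} on each. That loop includes $p_j$, because the entry for $p_j$ is nonempty while our bitvector is pending. It calls \textsc{try\_send\_yct}($p_j$) before \textsc{try\_send\_message} within that iteration, so $\UNACK_i[j]=0$ at the check unless an earlier iteration's \textsc{try\_send\_message} re-sent to $p_j$. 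If that happens, the new send sets $\UNACK_i[j]$ to $1$ again, and we use the next \ACK from $p_j$, which also arrives by \Cref{lemma:unack_liveness}.

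To rule out postponing the check forever, we order the events. Let $t$ be the time the head bitvector has become all zero (Step 1). At the receipt of the \emph{last} \ACK that clears a bit of this bitvector, \textsc{try\_send\_yct}($p_j$) is invoked. If $\UNACK_i[j]=1$ at that moment, the next \ACK from $p_j$ triggers a fresh invocation with $\UNACK_i[j]$ just zeroed. Before that check, only the \textsc{try\_send\_message} calls from earlier iterations of the same loop can interfere. If interference persists, we strengthen the argument by noting that \textsc{try\_send\_message} appends new bitvectors only \emph{behind} ours, so it never affects whether ours is the head. We would also observe that the loop's first iteration can be taken to be $p_j$, or that $p_i$ sends to $p_j$ only finitely many times while the application-level workload is finite. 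Under finite workload, eventually no further sends to $p_j$ occur, and the next \ACK from $p_j$ triggers a check with both conditions satisfied.

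\textbf{Step 3: conclude.} With both conditions met, the loop in \textsc{try\_send\_yct} sends a \YCT and pops the head. The induction then advances to the next bitvector, which is handled by the same argument. Hence every \EagerSend{} is eventually followed by its corresponding \YCT{}, and reliable delivery ensures that the \YCT{} arrives at $p_j$.
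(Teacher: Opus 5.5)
Your Step 1 is essentially the paper's entire proof. The paper cites \Cref{lemma:unack_liveness} to show that every bit of every bitvector in \EAGERSENT{} is eventually cleared, and concludes directly that the \YCT{}s are sent. It never mentions the second guard $\UNACK[j]=1$ in \textsc{try\_send\_yct}, nor that \textsc{try\_send\_yct} runs only inside receive-\ACK. Your Step 2 rightly spots this omission and identifies the exact interference: a \textsc{try\_send\_message} call in an earlier iteration of the receive-\ACK loop re-sends to $p_j$.

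Of your three ways of closing it, only one is valid. That new bitvectors are appended behind ours is true but has no bearing on $\UNACK_i[j]$. The iteration order over \EAGERSENT{} is unspecified, so you cannot take $p_j$ to be visited first without changing the protocol (e.g.\ running all \textsc{try\_send\_yct} calls before any \textsc{try\_send\_message}). The finite-workload argument works, and it is necessary, not merely convenient. Suppose the loop visits some entry $p_{j'}$ before $p_j$ (entries persist once created), $p_i$ stays in normal mode, and after the eager message $p_i$'s \OB holds an unbounded stream of messages to $p_j$. Each \ACK from $p_j$ resets $\UNACK_i[j]$ to $0$. But the \textsc{try\_send\_message} call in the $p_{j'}$ iteration then network-sends the next message to $p_j$ before the $p_j$ iteration is reached. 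So $\UNACK_i[j]=1$ at every call of \textsc{try\_send\_yct}($p_j$), the \YCT is never sent, and $p_j$ stays in secret mode forever. Hence the lemma, and the paper's proof of it, tacitly require a finite workload, as in the paper's model checking and experiments.

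You should state that hypothesis up front and replace the hedged end of Step 2 with a single argument; this also makes the induction on queue position unnecessary. Each application message is network-sent at most once, so $p_i$ performs only finitely many network-sends to $p_j$; this does not presuppose \Cref{prop:cykas_eventually_send}. Let $T$ be the later of two receive-\ACK events:
\begin{itemize}
\item the receipt of the \ACK for the last of these sends;
\item the event at which the last of the bitvectors up to and including ours in $\EAGERSENT_i[j]$ becomes all zero (your Step 1).
\end{itemize}
Throughout the loop of the event at $T$ and afterwards, $\UNACK_i[j]=0$, because no further send to $p_j$ can occur, so no earlier iteration can interfere. If our bitvector has not been popped before $T$, then in the $p_j$ iteration of that event the bits are cleared. \textsc{try\_send\_yct}($p_j$) then finds both guards satisfied and pops every bitvector up to and including ours, sending its \YCT. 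With this, your proof is correct and more careful than the paper's.
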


\begin{proof}
    As we have proved in \Cref{lemma:unack_liveness}, \ACK{}s for each message will eventually be sent and delivered. This means that, according to \Cref{algo:cykas_process_receive_ack}, every bit in the bitvectors stored in \EAGERSENT{} will be set to $0$ eventually, and thus all their corresponding \YCT{}s will be sent eventually.
\end{proof}

\begin{lemma}\label{lemma:mode_liveness}
    \MODE will become $0$ eventually.
\end{lemma}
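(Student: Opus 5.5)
The plan is to argue about a single process $p_j$. Its \MODE is incremented exactly once per \EagerSend it receive-delivers (\Cref{algo:cykas_process_receive_m}) and decremented exactly once per \YCT it receives (\Cref{algo:cykas_process_receive_yct}). So at any moment, \MODE equals the number of \EagerSend messages delivered at $p_j$ minus the number of \YCT messages received at $p_j$. First I will check that this difference is never negative. A \YCT to $p_j$ is only sent in \Call{try_send_yct}{} after popping a bitvector from $\EAGERSENT[p_j]$, and that bitvector was pushed when an \EagerSend to $p_j$ was network-sent. Moreover, the \YCT is only sent while $\UNACK[p_j] = 0$, which means the sender has already received the \ACK of that \EagerSend. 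By \Cref{algo:cykas_process_receive_m}, that \ACK is sent only after $p_j$ has already incremented \MODE.

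Next I pair each \EagerSend with its \YCT. \Cref{lemma:eager_sent_liveness} says every \EagerSend eventually gets its corresponding \YCT network-sent. Reliable delivery then says that \YCT eventually arrives at $p_j$ and decrements \MODE. Hence, for any finite collection of \EagerSend messages delivered at $p_j$, the matching decrements all eventually happen. If no further \EagerSend messages arrive, \MODE returns to $0$.

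The main obstacle is that new \EagerSend messages could keep arriving at $p_j$ forever, so \MODE might never be $0$ at one instant even though every individual increment is eventually cancelled. This is the starvation issue the authors flag. My first approach is to restrict to executions with finitely many application-sends, which is implicitly the setting of \Cref{prop:cykas_eventually_send}. Under that restriction only finitely many \EagerSend messages exist, so there is a last one delivered at $p_j$. After its \YCT arrives, and after the \YCT messages of all earlier \EagerSend messages arrive, \MODE equals $0$.

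If unbounded workloads must be handled, I would instead try a weaker reading: every particular increment of \MODE is eventually matched. I would then argue that each \OB head needs only a moment with $\MODE = 0$, and flag a fairness assumption as necessary. I do not see how to avoid such an assumption with the present algorithms. I expect this scoping step, not the counting argument, to be the delicate part of the proof.
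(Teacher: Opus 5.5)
Your proof follows the paper's argument: \MODE counts delivered \EagerSend{}s minus received \YCT{}s, and \Cref{lemma:eager_sent_liveness} together with reliable delivery eventually cancels every increment. You add two things the paper does not: the check that each \YCT to $p_j$ is sent only after the \ACK of a distinct earlier \EagerSend (so the count never goes negative), and the explicit restriction to finitely many application-sends; both are sound, and the restriction states an assumption the paper's one-line proof uses silently, since with an unbounded stream of \EagerSend{}s (for example, a new one overtaking the previous \YCT on a non-FIFO channel) \MODE can remain positive forever.
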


\begin{proof}
    According to \Cref{algo:cykas_process_receive_m} and \Cref{algo:cykas_process_receive_yct}, \MODE is incremented when receiving an \EagerSend and decremented when receiving a \YCT{}. In \Cref{lemma:eager_sent_liveness} we have proved that \YCT{}s for every \EagerSend{}s will be sent eventually. Thus, $MODE$ will be set to $0$ eventually.
\end{proof}

From \Cref{lemma:unack_liveness} and \Cref{lemma:mode_liveness} we have that, in a reliable network, at any given process, eventually $\UNACK[j] = 0$ for all $j$ and $\MODE = 0$. Therefore, said process will eventually be able to network-send any messages that have been inserted into its \OB, and \Cref{prop:cykas_eventually_send} holds.

\section{Bounded Verification}\label{sec:verification}

In this section, we describe bounded verification of a Rust implementation of the Cykas protocol. We used Stateright~\cite{stateright}, an implementation-level model-checking library for Rust, to check the safety and liveness properties of the protocol for a bounded class of executions.

\begin{table*}[htbp]
\caption{Model checking results for Cykas and two other protocols, with 3 actors and 3 application messages per actor}
\centering
\begin{tabular}{lrrrrr}
\toprule
Protocol & Total/unique states & Max depth & Time (s) & Correct? \\
\midrule
MFSS protocol~\cite{mattern2005non} & 1,543M / 527M & 37 & 163 & \ding{51} \\
Cykas w/ secret-mode sends~(\S\ref{subsec:sending-in-secret-mode}) & 44M / 13M & 43 & 5 & \ding{55} \\
Cykas protocol & 5,974M / 1,945M & 49 & 709 & \ding{51} \\
\bottomrule \\
\end{tabular}
\label{tab:protocol_check_results}
\end{table*}

In Stateright, an \emph{actor system} consists of multiple actors connected by a network, and property checkers that check if specified properties are violated during model checking.  In our case, actors are Cykas processes. We implemented custom causality and liveness checkers in Stateright for our Cykas implementation.

Our causality checker leverages the fact that vector clocks~\cite{mattern-vector-time, fidge-vector-time, schmuck-dissertation} precisely capture Lamport~\cite{lamport-clocks}'s happens-before relation: for any messages $m_1, m_2$ in an execution, $send(m_1) \rightarrow send(m_2)$ if and only if the vector clock of $m_1$ is less than the vector clock of $m_2$~\cite{mattern-vector-time}. Our causality checker associates vector clocks with sent messages. So that we do not defeat the purpose of sender-side enforcement of causal delivery, rather than piggybacking vector clocks directly on messages in the protocol (as a receiver-side protocol would do), the vector clocks are maintained in a separate data structure by the causality checker.

The causality checker maintains a history of each execution of the actor system, which includes the vector clock of each actor and a delivery history where each actor stores its delivered messages along with their vector clocks.
When an actor $p$ sends a message, the causality checker updates $p$'s vector clock and associates the vector clock with the message.
When an actor $p$ delivers a message $m$, the causality checker appends $m$ and its vector clock to $p$'s delivery history, and updates $p$'s vector clock.
At the end of each action, the causality checker inspects the delivery history for each actor. If the vector clocks of the delivered messages are out of order, the checker reports a violation of causal delivery.

When checking for causality violations, our causality checker takes only application messages -- that is, \NormalSend and \EagerSend messages -- into account.  \YCT and \ACK messages can be ignored by the causality checker because they are part of the causal delivery protocol itself and not visible to applications.

Our liveness checker maintains a counter for message deliveries in an execution. Each time a message is delivered, the counter is incremented. During model checking, if the current state is a final state, the liveness checker checks whether the number of delivered messages matches the number of application-sent messages. (In particular, we want to avoid a situation in which an application-sent message gets stuck in the \OB and is never network-sent.) If not, the checker reports a liveness violation.

We use an AWS EC2 \emph{c7i.48xlarge} instance with 96 threads to execute model-checking jobs in parallel. \Cref{tab:protocol_check_results} (row 3) shows the results of model-checking the safety and liveness of the Cykas protocol on a system of three actors, with three messages sent per actor.  Since our bounded liveness checker only checks the final state of each execution, the running times we report in \Cref{tab:protocol_check_results} are almost entirely due to the safety checker.  We found no violations of safety or liveness for the Cykas protocol.  We also checked the correctness of an implementation of the MFSS~\cite{mattern2005non} protocol up to the given bounds (\Cref{tab:protocol_check_results}, row 1). Finally, we checked a variant of the protocol that allows communication with an eager sender while in secret mode (\Cref{tab:protocol_check_results}, row 2), and found that this variant \emph{does} violate causal delivery, as discussed in \Cref{subsec:sending-in-secret-mode}.  The check of the buggy protocol is relatively fast because Stateright stops exploring states as soon as it finds a property violation.

\section{Evaluation}\label{sec:evaluation}

In this section, we evaluate the performance of the Cykas protocol compared to two other causal delivery protocols: the MFSS protocol~\cite{mattern2005non}, described in \Cref{subsec:buffer_issue}, and a classic receiver-side protocol due to Raynal et al.~\cite{raynal1991causal}, henceforth the \emph{matrix} protocol.
Our evaluation seeks to answer: under what system characteristics, long-running job characteristics, and communication patterns does Cykas outperform the other two protocols?

We describe our experimental setup in \Cref{subsec:exp_setup}, then identify where Cykas excels through three analyses. First, we quantify the scalability limitations of the receiver-side approach compared to Cykas and MFSS~(\Cref{subsec:matrix_scalability}). Second, we examine Cykas's performance advantage over MFSS under varying job characteristics~(\Cref{subsec:perf_analysis}). Third, we analyze how different communication patterns affect Cykas's performance~(\Cref{subsec:random_vs_hotspot}).

\subsection{Experimental Setup}\label{subsec:exp_setup}

We run our experiments on a simulated network to isolate the behavior of the protocols and avoid potential interference from network variability.
We represent message-passing programs as partially-ordered sets of message send events, then simulate execution of these programs using each of the protocols.
Messages can trigger long-running computational jobs upon delivery.
We vary the following parameters across experiments: number of processes, network characteristics (network bandwidth and message propagation delay), job length, communication frequency, and hotspot percentages (see \Cref{subsec:random_vs_hotspot}) in the system.
For each execution, we measure \emph{total execution time} (i.e., time to complete all protocol communications), and \emph{average job start time} (the average time at which long-running jobs in the execution begin after their triggering messages are delivered).
The latter metric provides insight into whether Cykas enables earlier job initiation through its eager sending mechanism.

\subsection{Scalability of Sender-side vs. Receiver-side Protocols}\label{subsec:matrix_scalability}

We first establish the conditions under which sender-side protocols become preferable to receiver-side protocols.
Receiver-side protocols like the matrix protocol~\cite{raynal1991causal} embed causal metadata directly in application messages, requiring metadata of size $O(n^2)$ per message for a system of $n$ processes in the worst case.\footnote{In the special case where all messages are broadcast to all participants, we can use vector clocks of size $O(n)$~\cite{birman-lightweight-cbcast}.  The three protocols we consider here all support arbitrary point-to-point messaging patterns, rather than only broadcast messages.}
Compared to receiver-side protocols, sender-side protocols exchange additional control messages (\ACK, \YCT), but impose no metadata overhead per message.
Under bandwidth constraints, metadata size creates a bottleneck as the number of processes in the system increases.
To quantify this bottleneck, we evaluate how execution time varies with the process-to-bandwidth ratio across different system scales.

\begin{figure*}[tb]
    \centering
    \includegraphics[width=0.8\textwidth]{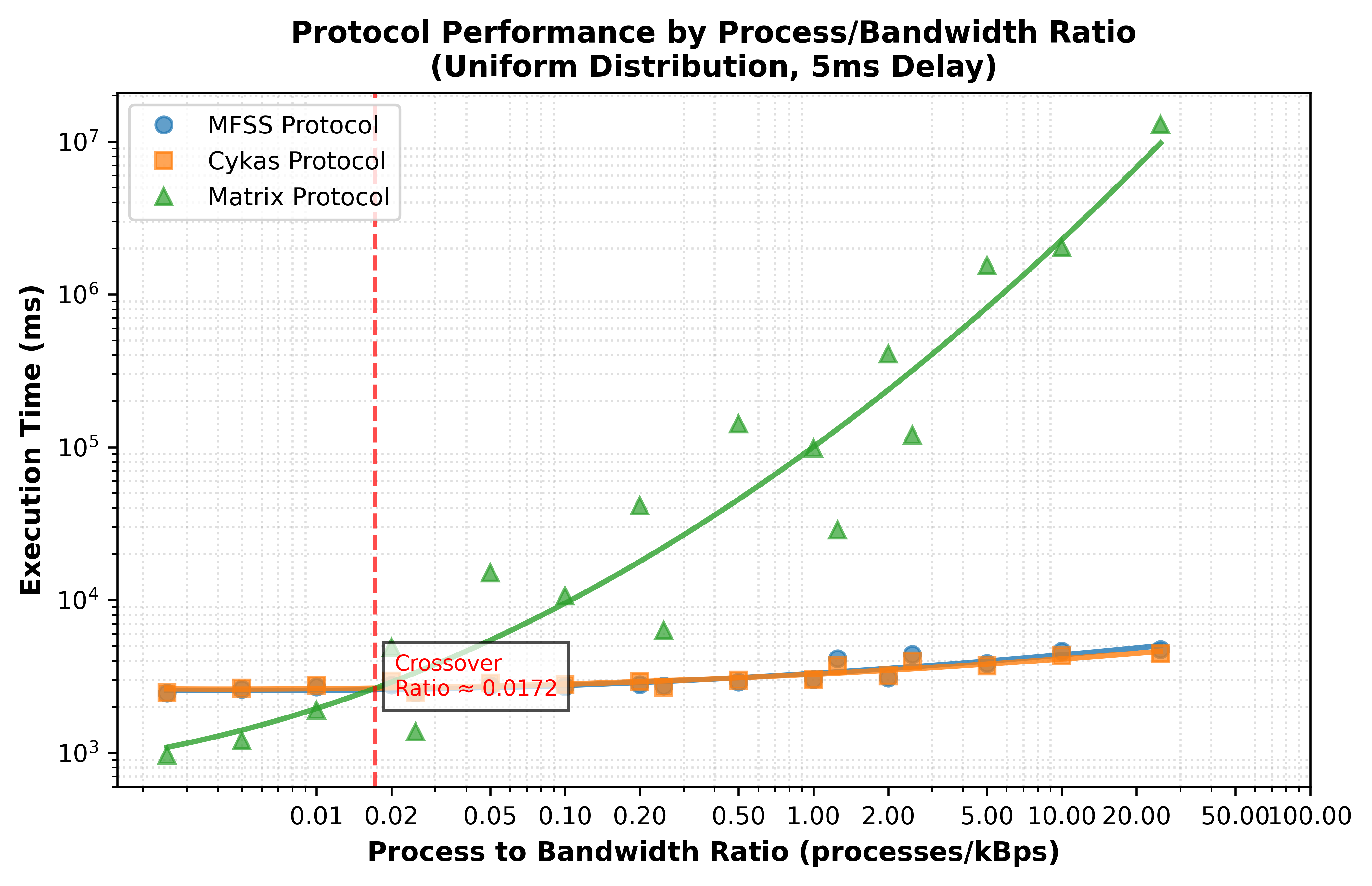}

    \caption{Execution time comparison for the MFSS, Cykas, and matrix protocols at varying process-to-bandwidth ratios. With increasing process-to-bandwidth ratio, the matrix protocol's execution time increases quadratically.}
    \label{fig:matrix_scalability}
\end{figure*}

\Cref{fig:matrix_scalability} illustrates how execution times of the matrix, Cykas, and MFSS protocols vary with the process-to-bandwidth ratio, expressed as processes/kBps. For these experiments, we set the number of processes to 25, 50, 100, 200, and 500, and the bandwidth to 20, 100, 1000, and 10000 kBps. We fixed the propagation delay to 5 ms and configured each process to send 100 messages.
With increasing process-to-bandwidth ratio, the matrix protocol's execution time increases quadratically, with a crossover point around a ratio of 0.02 processes/kBps where the matrix protocol becomes slower than the two sender-side protocols, due to the matrix protocol's metadata overhead. For example, in a system with 5000 kBps available bandwidth and a propagation delay of 5 ms, the matrix protocol would be the better choice only if the number of processes in the system is smaller than 100 (i.e., 5000 kBps $\times$ 0.02 processes/kBps).
The absolute value of the crossover point should be treated as illustrative rather than definitive; optimizing the implementation of receiver-side protocols in specific scenarios would shift the crossover point. Nevertheless, we see that the matrix protocol scales poorly under lower bandwidth conditions and large numbers of participants, due to its metadata-heavy messages, and sender-side protocols provide more scalable performance as system size grows or bandwidth decreases.

This experiment shows the system characteristics where sender-side protocols excel: systems with bandwidth constraints, large numbers of processes, and low propagation delay.
For the rest of this section, we focus on systems with these characteristics and now turn to a comparison of the two sender-side protocols, Cykas and MFSS.

\subsection{Cykas vs MFSS: Long-Running Job Performance Analysis}\label{subsec:perf_analysis}

\begin{figure*}[htb]
    \centering
    \includegraphics[width=\textwidth]{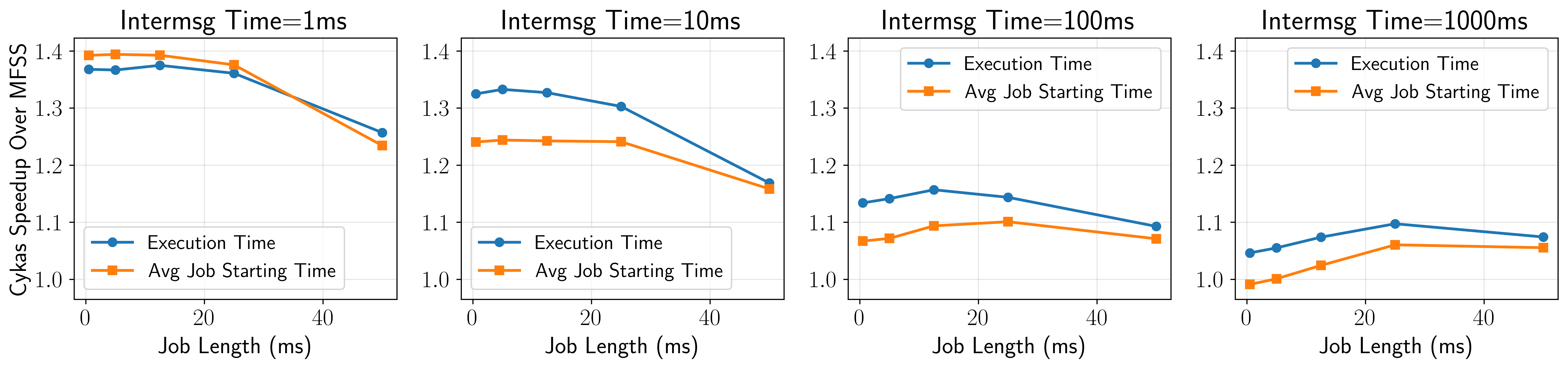}

    \caption{Cykas speedup over MFSS across varying job characteristics (higher is better). Each subfigure shows a different communication frequency, with job length on the x-axis and Cykas speedup over MFSS on the y-axis (values > 1 indicate Cykas outperforms MFSS). Blue lines show execution time speedup, orange lines show average job start time speedup.}
    \label{fig:job_impact}
\end{figure*}

\begin{figure*}[htb]
    \centering
    \includegraphics[width=0.7\textwidth]{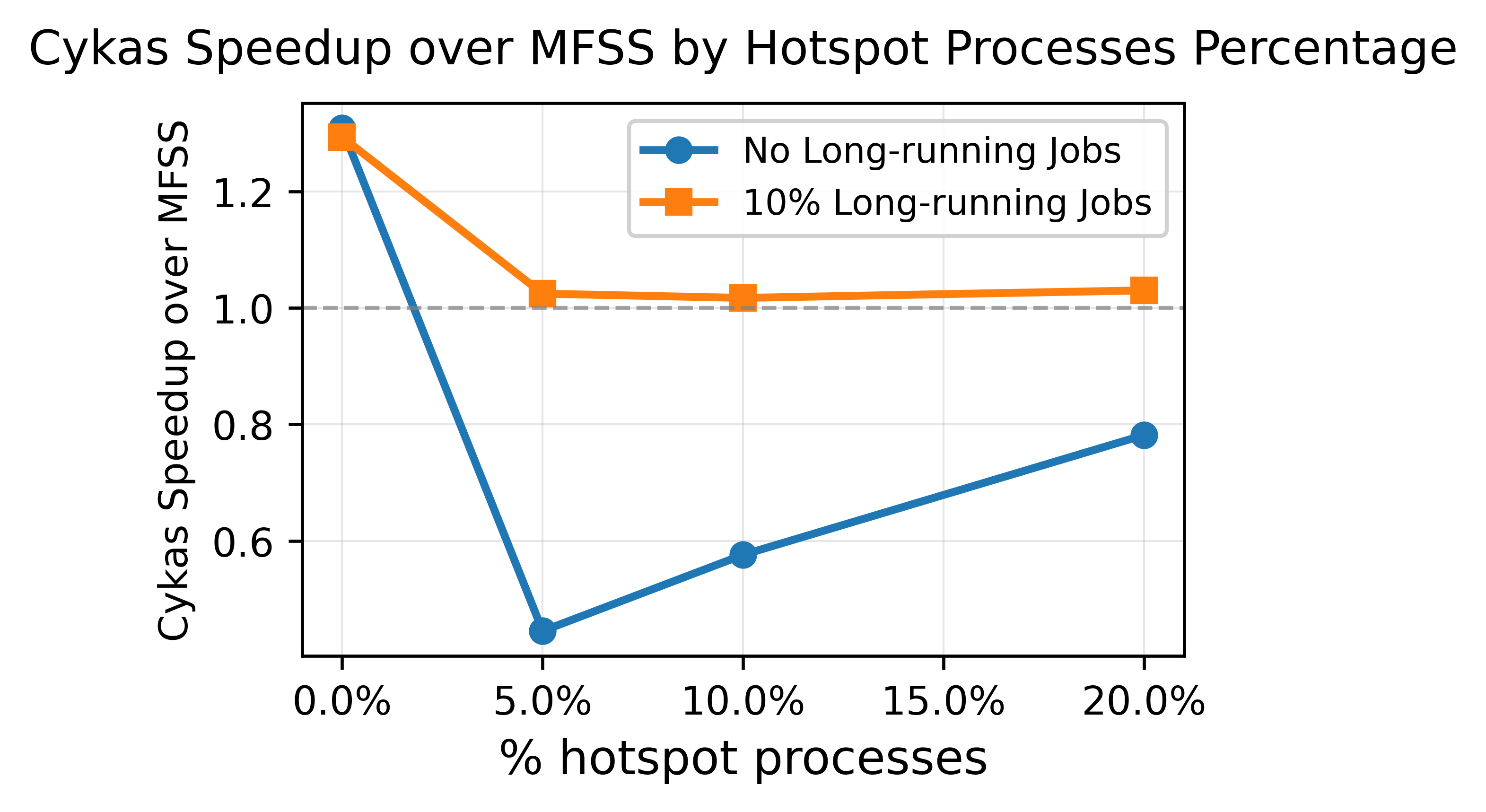}

    \caption{Cykas speedup over MFSS under different percentages of hotspot processes. The blue line shows scenarios without long-running jobs, while the orange line shows scenarios where 10\% of application messages trigger long-running jobs. Values > 1.0 indicate Cykas outperforms MFSS.}
    \label{fig:hotspot_impact}
\end{figure*}

In~\Cref{subsec:buffer_issue} we motivated Cykas with a scenario where a long-running job is triggered by receiving a message. To understand how Cykas performs relative to MFSS, we examine how job length and communication frequency affect the performance of both protocols.

We compare Cykas and MFSS's performance under varying job lengths (0.5, 5, 12.5, 25, and 50 ms). Our experimental setup uses 100 processes with 50 kBps bandwidth and 5ms propagation delay, resulting in a process-to-bandwidth ratio of 2.0. We set 10\% of application messages to trigger long-running jobs at the recipient when delivered. We generated random workloads in which each process sent 100 messages to a random recipient, where every process had the same probability of receiving a message. Finally, we varied the communication frequency, i.e., how long processes wait between sending messages (1, 10, 100, and 1000 ms).

\Cref{fig:job_impact} shows our results. Each subfigure presents results for a different communication frequency, with job length shown on the x-axis and Cykas speedup over MFSS on the y-axis. Y-axis values above 1.0 indicate Cykas outperforms MFSS. Blue lines show execution time speedup and orange lines show average job start time speedup compared to MFSS.

Across subfigures, we see that Cykas nearly always enjoys an execution time speedup compared to MFSS, and that as job lengths increase to 50ms, the average speedup first increases slightly and then decreases (but remains above 1.0). The intuition for this behavior is that for very short jobs, there is less to be gained from the early job starts that Cykas allows; conversely for very long jobs, execution time becomes dominated by the long-running jobs anyway, and so the effect of starting jobs early is less pronounced.  Between these extremes is a sweet spot for Cykas.  Cykas also enjoys the biggest advantage over MFSS when communication is frequent, as in \Cref{fig:job_impact}~(left). This is because frequent communication provides more opportunities for Cykas to use \EagerSend, whereas when communication is infrequent, Cykas behaves more like MFSS.

The average job starting time speedup follows almost the same trend as the execution time speedup, which  Cykas's faster execution time is most likely brought about by starting jobs earlier.
We conclude that Cykas provides the biggest advantage over MFSS in systems with frequent communication and moderately long-running jobs.

\subsection{Communication Pattern Effects: Uniform vs. Hotspot Distribution}\label{subsec:random_vs_hotspot}

Communication patterns vary across different systems. In this experiment, we focus on two common patterns: uniform and hotspot patterns. In uniform patterns (like the random workloads from the previous section), every participant has roughly the same probability of receiving a message, while in hotspot patterns, a portion of processes (which we call \emph{hotspot processes}) tend to receive more messages than other processes.

We compare MFSS and Cykas execution time under different percentages of hotspot processes in the system. We use the same number of processes, bandwidth, and propagation delay as in~\Cref{subsec:perf_analysis}. The communication frequency is 10ms. The hotspot percentages we test are 0\% (i.e., uniform), 5\%, 10\%, and 20\%. Messages target hotspot processes with 80\% probability.
We consider two scenarios: systems without long-running jobs, and systems where 10\% of messages trigger long-running jobs (mean duration 25ms, normally distributed).

As shown in~\Cref{fig:hotspot_impact}, when there are no hotspot processes (0\% hotspot processes), Cykas outperforms MFSS by 30\% (speedup $\approx$1.3$\times$) in both scenarios. However, as the hotspot percentage increases, Cykas's advantage diminishes. In systems without long-running jobs, Cykas underperforms MFSS when hotspots are present (speedup < 1). Even in systems with long-running jobs, where Cykas maintains its performance advantage across hotspot scenarios, the speedup is lower than the uniform case.

This performance degradation occurs because hotspot processes become communication bottlenecks under Cykas. Frequent \EagerSend messages force these heavily-targeted processes into secret mode repeatedly, blocking subsequent message sending and creating system-wide delays. Uniform distribution avoids this bottleneck effect, allowing Cykas to leverage its eager sending advantages.
We conclude that Cykas is best suited for uniform communication patterns, where message traffic is evenly distributed across processes.

\section{Related Work}\label{sec:related}

Protocols for causal message delivery have been studied since at least the 1980s. Some protocols are specific to the case of broadcast messages~\cite{birman-virtual-synchrony,birman-reliable,birman-lightweight-cbcast}, while others are more general~\cite{raynal1991causal,schiper-causal-ordering,mattern2005non}. Our focus here is on the general setting.
With the exception of the MFSS~\cite{mattern2005non} protocol, receiver-side enforcement of causal delivery has been the norm in the literature, though Birman et al.~\cite{birman-lightweight-cbcast} also briefly mention a sender-side technique for causal delivery of point-to-point messages ``in which the sender is inhibited from starting new multicasts until reception of the point-to-point message is acknowledged''.

The MFSS protocol is stronger in a formal sense than general causally ordered communication --- that is, the executions allowed by it are a strict subset of those allowed by the causal order. Di Giusto et al.~\cite[\S7]{di-giusto-message-passing} hypothesize that \cite{mattern2005non}'s protocol is ``somewhere between mailbox and causally ordered'' in terms of its strength (where ``mailbox'' refers to a communication model that is slightly more restrictive than causally ordered communication~\cite{di-giusto-message-passing}). Receiver-side protocols, on the other hand, allow \emph{all} executions allowed by the causal order. Our aim with Cykas is to recover some of the message ordering flexibility of the receiver-side approach while avoiding the metadata overhead it imposes.

Nieto et al.~\cite{nieto-crdts-aneris} and Redmond et al.~\cite{redmond-cbcast-lh} mechanically verify the safety of executable causal broadcast protocols. Nieto et al. implement a verified-safe causal broadcast protocol as part of a larger proof development that also verifies the correctness of conflict-free replicated data types, using the  Aneris separation logic framework in the Coq proof assistant. Redmond et al. express causal message delivery property as a refinement type in Liquid Haskell, and use it to verify the safety of a Haskell implementation of Birman et al.~\cite{birman-lightweight-cbcast}'s causal broadcast protocol.  These verification efforts focus on broadcast messages rather than point-to-point messages, and both consider only receiver-side protocols. To our knowledge, ours is the first mechanized verification of an implementation of a sender-side causal delivery protocol, and the first to consider a liveness property. Of course, our verification effort is \emph{bounded}, unlike Nieto et al.'s and Redmond et al.'s proof developments.

\section{Conclusion and Future Work}
\label{sec:discussion}

In this paper, we have presented the Cykas protocol, a new sender-side protocol for causal message delivery that eagerly sends messages that would traditionally need to be delayed at the sender's side. We show that Cykas enjoys the scalability of the traditional sender-side approach, but with a performance advantage that is most evident for systems with frequent communication, uniform communication patterns, and moderately long-running jobs.

In Cykas, the decision of whether a message should be sent normally or eagerly is not exposed to the application level; this is by design, so that the protocol is a drop-in replacement for existing causal delivery protocols.  However, it could be useful to expose both message-sending options to the application level; future work could explore such a design. More generally, an issue with any causal delivery protocol is that the happens-before relation is quite coarse-grained and, depending on the application, may relate messages that need not actually be delivered in causal order for application correctness.  We hypothesize that language-level information flow analysis could play a role in determining the sendability of a message and that the Cykas protocol could be a jumping-off point for such an investigation.

\bibliographystyle{ACM-Reference-Format}
\bibliography{refs}

\end{document}